\begin{document}

%% ================================================================ %%
\frenchspacing

\theoremstyle{plain}
\newtheorem{theorem}{Theorem}[section]
\newtheorem{lemma}[theorem]{Lemma}
\newtheorem{proposition}[theorem]{Proposition}
\newtheorem{claim}[theorem]{Claim}
\newtheorem{corollary}[theorem]{Corollary}

\theoremstyle{definition}
\newtheorem{remark}[theorem]{Remark}
\newtheorem{note}{Note}[section]
\newtheorem{definition}[theorem]{Definition}
\newtheorem{example}[theorem]{Example}
\newtheorem*{ackn}{Acknowledgements}
\newtheorem{assumption}{Assumption}
\newtheorem*{assuAC}{Assumption ($\mathbb{AC}$)}
\newtheorem*{assuEM}{Assumption ($\mathbb{EM}$)}
%% ================================================================ %%
\renewcommand{\theequation}{\thesection.\arabic{equation}}
\numberwithin{equation}{section}

\newcommand{\Law}{\ensuremath{\mathop{\mathrm{Law}}}}
\newcommand{\loc}{{\mathrm{loc}}}

\newcommand{\prozess}[1][L]{{\ensuremath{#1=(#1_t)_{0\le t\le T}}}\xspace}
\newcommand{\prazess}[1][L]{{\ensuremath{#1=(#1_t)_{0\le t\le T^*}}}\xspace}
% ================================================================ %%
\def\P{\ensuremath{\mathrm{I\kern-.2em P}}}
\def\E{\mathrm{I\kern-.2em E}}
\def\bF{\mathbf{F}}
\def\F{\ensuremath{\mathcal{F}}}
\def\R{\ensuremath{\mathbb{R}}}
\def\C{\ensuremath{\mathbb{C}}}
\def\fF{\ensuremath{\mathrm{I\kern-.2em F}}}

\def\LtT{\ensuremath{(L_t)_{t\in[0,T^*]}}}
\def\lev{L\'{e}vy\xspace}
\def\lk{L\'{e}vy--Khintchine\xspace}
\def\smmg{semimartingale\xspace}
\def\mg{martingale\xspace}
\def\tih{time-inhomogeneous\xspace}
\def\chartri{\ensuremath{(b,\sigma,\nu)}}

\def\ud{\ensuremath{\mathrm{d}}}
\def\dsdx{\ensuremath{(\ud s, \ud x)}}
\def\dtdx{\ensuremath{(\ud t, \ud x)}}
\def\intrr{\ensuremath{\int_{\R}}}
\def\MM{\ensuremath{\mathscr{M}}}

\def\Rp{\mathbb{R}_{+}}

\def\EM{\ensuremath{(\mathbb{EM})}\xspace}
\def\ES{\ensuremath{(\mathbb{ES})}\xspace}
\def\AC{\ensuremath{(\mathbb{AC})}\xspace}

\def\e{\mathrm{e}}
\def\intrr{\int_{\R}}
\def\dt{\ud t}
\def\ds{\ud s}
\def\dx{\ud x}
\def\dy{\ud y}
%% ------------------ Definitions: Forward Libor ------------------ %%
\def\libor{L(t,T)}
\def\fttj{F(t,T^*_{j},T^*_{j-1})}
\def\fTtj{F(T^*_j,T^*_{j},T^*_{j-1})}
\def\Ptj{\P_{T^*_{j-1}}}
\def\fttjp{\widetilde{F}(t,T_{j},T_{j+1})}
\def\tb{\bar{T}}
\def\tj{T^*_j}
\def\tjp{T^*_{j-1}}
\def\half{\frac{1}{2}}
\def\thstj{\eta(s,T^*_j)}
\def\LibT{L(t,T_j^*)}
\def\MeaT{\P_{T^*_{j-1}}}
\def\volT{\eta(s,T_j^*)}
\def\vol2T{(\eta(s,T_j^*))^2}
\def\LevT{L_s^{T^*_{j-1}}}
%% ---------------------------------------------------------------- %%
%% ================================================================ %%

\title[Compositions in L\'evy term structure models]
      {On the valuation of compositions in L\'evy term structure models}

\author{Wolfgang Kluge}
\author{Antonis Papapantoleon}

\address{BNP Paribas, 10 Harewood Avenue, London NW1 6AA, United Kingdom}
\email{wolfgang.kluge@uk.bnpparibas.com}

\address{Financial and Actuarial Mathematics, Vienna University of Technology,
         Wiedner Hauptstrasse 8/105, 1040 Vienna, Austria }
\email{papapan@fam.tuwien.ac.at}

\keywords{Time-inhomogeneous \lev process, forward rate model, forward price model,
          option on composition, Fourier transform}

\thanks{We thank Ernst Eberlein for motivating discussions and helpful comments.
        Both authors gratefully acknowledge the financial support from the Deutsche
        Forschungsgemeinschaft (DFG, EB 66/9-2). A.P. gratefully acknowledges the
        financial support from the Austrian Science Fund (FWF grant Y328, START Prize)}

\date{}
\maketitle
\pagestyle{myheadings}

\begin{abstract}
We derive explicit valuation formulae for an exotic path-dependent interest
rate derivative, namely an option on the composition of LIBOR rates. The
formulae are based on Fourier transform methods for option pricing. We
consider two models for the evolution of interest rates: an HJM-type forward
rate model and a LIBOR-type forward price model. Both models are driven
by a \tih \lev process.
\end{abstract}

\section{Introduction}
\label{intro}

The main aim of this paper is to derive simple and analytically tractable
valuation formulae for an exotic path dependent interest rate derivative,
namely an option on the composition of LIBOR rates. The formulae make use
of Fourier transform techniques, see e.g. Eberlein, Glau, and Papapantoleon
\citeyear{EberleinGlauPapapantoleon08},
and the change-of-numeraire technique. There are two models for the term
structure of interest rates considered in this paper: a
Heath--Jarrow--Morton-type forward rate model and a LIBOR-type forward
price model, both driven by a general \tih \lev process.

A standard approach to modeling the term structure of interest rates
is that of \citeN{HeathJarrowMorton92}. In the Heath--Jarrow--Morton
(henceforth HJM) framework subject to modeling are instantaneous
continuously compounded forward rates
which are driven by a $d$-dimensional Wiener process. However, data
from bond markets do not support the use of the normal distribution.
Empirical evidence for the non-Gaussianity of daily returns from
bond market data can be found in \citeN[chapter 5]{Raible00}; the
fit of the normal inverse Gaussian distribution to the same data is
particularly good, supporting the use of \lev processes for modeling
interest rates. Similar evidence appears in the risk-neutral world,
i.e. from caplet implied volatility smiles and surfaces; see
\citeN{EberleinKluge04}.

The \lev forward rate model was developed in
\citeN{EberleinRaible99} and extended to time-inhomogeneous \lev
processes in Eberlein, Jacod, and Raible
\citeyear{EberleinJacodRaible05}. In these models, forward
rates are driven by a (\tih) \lev process; therefore,
the model allows to accurately capture the empirical dynamics of
interest rates, while it is still analytically tractable, so that
closed form valuation formulae for liquid derivatives can be
derived. Valuation formulae for caps, floors, swaptions and range
notes have been derived in \citeANP{EberleinKluge04}
\citeyear{EberleinKluge04,EberleinKluge05}, while estimation and
calibration methods are discussed in \citeANP{EberleinKluge04}
\citeyear{EberleinKluge04,EberleinKluge06}.

Moreover, \citeN{EberleinJacodRaible05} provide a complete classification
of all equivalent martingale measures in the \lev forward rate model.
They also prove that in certain situations -- essentially, if the
driving process is $1$-dimensional -- the set of equivalent
martingale measures becomes a singleton.

The main pitfall of the HJM framework is the assumption of
continuously compounded rates, while in real markets interest
accrues according to a discrete grid, the tenor structure. LIBOR
market models, that is, arbitrage-free term structure models on a
discrete tenor, were constructed in a series of articles by
\shortciteN{SandmannSondermannMiltersen95},
\shortciteN{MiltersenSandmannSondermann97},
\shortciteN{BraceGatarekMusiela97}, and \citeN{Jamshidian97}.
In addition, LIBOR
market models are consistent with the market practice of pricing
caps and floors using Black's formula (cf. \citeNP{Black76}).

Nevertheless, a familiar phenomenon appears: since the model is
driven by a Brownian motion, it cannot be calibrated accurately to
the whole term structure of volatility smiles. As a remedy,
\citeN{EberleinOezkan05} developed a LIBOR model driven by time
inhomogeneous \lev processes. Valuation methods for caps and floors,
using approximation arguments, were presented in \citeN{EberleinOezkan05}
and \citeN{Kluge05}, while calibration issues for this model are
discussed in \citeN{EberleinKluge06}.

The \lev forward price model is a market model based on the \emph{forward
price} -- rather than the \emph{LIBOR rate} -- and driven by time
inhomogeneous \lev processes; it was put forward by \citeANP{EberleinOezkan05}
(2005, pp. 342--343). A detailed construction of the model is
presented in \citeANP{Kluge05} (2005, Chapter 3); there, it is also shown how
this model can be embedded in the \lev forward rate model.

Although the forward LIBOR rate and the forward price differ only by an
additive and a multiplicative constant, the
two specifications lead to models with very different qualitative
and quantitative behavior. In the LIBOR model, LIBOR rates change by
an amount relative to their current level, while in the forward
price model changes do not depend on the actual level (cf.
\citeNP[p. 60]{Kluge05}). There are authors who claim that models
based on the forward process -- also coined ``arithmetic'' or
``Bachelier'' LIBOR models -- are able to better describe the
dynamics of the market than (log-normal) LIBOR market models; see
\citeN{Henrard05}.

Another advantage of the forward price model is that the driving
process remains a time-inhomogeneous \lev process under each forward
measure, hence this model  is parti\-cularly suitable for practical
implementation. The downside is that negative LIBOR rates can occur,
like in an HJM model.

This paper is organized as follows: in section \ref{PIIAC} we review
some basic properties of the driving \tih \lev processes and in section
\ref{LTSMs} we describe the forward rate and forward price frameworks
for modeling the term structure of interest rates. In section \ref{compo}
the payoff of the option on the composition is described and valuation
formulae are derived in the two modeling frameworks. Finally, section
\ref{conclusion} concludes.

\section{Time-inhomogeneous L\'evy processes}
\label{PIIAC}

Let ($\Omega, \F, \fF, \P$) be a complete stochastic basis, where
$\F=\F_{T^*}$ and the filtration $\fF=(\F_t)_{t\in[0,T^*]}$
satisfies the usual conditions; we assume that $T^*\in\Rp$ is a
finite time horizon. The driving process $L=\LtT$ is a \emph{\tih
\lev process}, or a \emph{process} with \emph{independent
increments} and \emph{absolutely continuous} characteristics, in the
sequel abbreviated PIIAC. Therefore, $L$ is an adapted, c\`{a}dl\`{a}g, real-valued
stochastic process with independent increments, starting from zero,
where the law of $L_t$, $t\in[0,T^*]$, is described by the
characteristic function
\begin{align}
\E\left[\e^{iuL_{t}}\right]
 = \exp\int_{0}^{t}\bigg( ib_su - \frac{c_s}{2}u^{2}
  + \int_{\R}(\e^{iux}-1-iux)\lambda_s(\ud x)\bigg)\ud s,
\end{align}
where $b_t\in\R$, $c_t\in\Rp$ and $\lambda_t$ is a \lev measure,
i.e. it satisfies $\lambda_t(\{0\})=0$ and
$\int_{\R}(1\wedge|x|^2)\lambda_t(\ud x)<\infty$, for all
$t\in[0,T^*]$. In addition, the process $L$ satisfies Assumptions
($\mathbb{AC}$) and ($\mathbb{EM}$) given below.

\begin{assuAC}
The triplets ($b_t,c_t,\lambda_t$) satisfy
\begin{eqnarray}
\int_{0}^{T^*}\bigg( |b_t| + c_t
  + \int_{\R}(1\wedge|x|^2)\lambda_t(\ud x) \bigg)\ud t <\infty.
\end{eqnarray}
\end{assuAC}

\begin{assuEM}
There exist constants $M, \varepsilon>0$ such that for every
$u\in[-(1+\varepsilon)M,(1+\varepsilon)M]$
\begin{equation}\label{eq:Int}
    \int_0^{T^*}\int_{\{|x|>1\}}\exp(ux) \lambda_t(\ud x)\ud t<\infty.
\end{equation}
Moreover, without loss of generality, we assume that
$\int_{\{|x|>1\}}\e^{ux}\lambda_t(\ud x)<\infty$ for all
$t\in[0,T^*]$ and all $u\in[-(1+\varepsilon)M,(1+\varepsilon)M]$.
\end{assuEM}

These assumptions render the process \prazess a \emph{special}
semimartingale, therefore it has the canonical decomposition (cf.
Jacod and Shiryaev
\citeyearNP[II.2.38]{JacodShiryaev03}, and
\shortciteNP{EberleinJacodRaible05})
\begin{align}\label{canonical}
 L_{t} = \int_{0}^{t}b_s\ud s
       + \int_{0}^{t}\sqrt{c_s} \ud W_{s}
       + \int_{0}^{t}\int_{\R} x(\mu^{L}-\nu)(\ud s,\ud x),
\end{align}
where $\mu^L$ is the random measure of jumps of the process $L$ and
\prazess[W] is a $\P$-standard Brownian motion. The \emph{triplet of
predictable} or \emph{semimartingale characteristics} of $L$ with
respect to the measure $P$, $\mathbb T(L|P)=(B,C,\nu)$, is
\begin{eqnarray}\label{ch4:triplet}
 B_{t}=\int_{0}^{t}b_s\ud s, &&
 C_t=\int_0^t c_s\ud s,\quad
 \nu([0,t] \times A)=\int_0^t\int_A \lambda_s(\ud x)\ud s,
\end{eqnarray}
where $A\in\mathcal{B}(\R)$. The triplet ($b,c,\lambda$) represents
the \emph{local} or \emph{differential characteristics} of $L$. In
addition, the triplet of semimartingale characteristics ($B,C,\nu$)
determines the distribution of $L$.

We denote by $\theta_s$ the \emph{cumulant generating function} (i.e.
the logarithm of the moment generating function) associated with the
infinitely divisible distribution with \lev triplet
($b_s,c_s,\lambda_s$), i.e. for
$z\in[-(1+\varepsilon)M,(1+\varepsilon)M]$
\begin{align}\label{ch4:cumulant}
\theta_s(z) := b_sz+\frac{c_s}{2}z^{2}
             + \int_{\R}(\e^{zx}-1-zx)\lambda_s(\ud x).
\end{align}
Subject to Assumption \EM, $\theta_s$ is well defined and can be extended
to the complex domain $\C$, for
$z\in\C$ with $\Re z\in[-(1+\varepsilon)M,(1+\varepsilon)M]$ and
the characteristic function of $L_t$ can be written as
\begin{align}
\E\left[\e^{iuL_{t}}\right] = \exp\int_{0}^{t} \theta_s(iu)\ud s.
\end{align}
If $L$ is a (time-homogeneous) \lev process, then
($b_s,c_s,\lambda_s$) and thus also $\theta_s$ do not depend on $s$,
and $\theta$ equals the cumulant generating function of $L_1$.

\begin{lemma}\label{log-mom}
Let \prazess be a \tih \lev process satisfying assumption
$(\mathbb{EM})$ and $f:\Rp\rightarrow\C$ a continuous function such
that $|\Re(f)|\leq M$. Then
\begin{align}
 \E\Bigg[\exp\int_0^t f(s)\ud L_s \Bigg]
 = \exp\int_0^t \theta_s\big(f(s)\big)\ud s.
\end{align}
(The integrals are to be understood componentwise for real and
imaginary part.)
\end{lemma}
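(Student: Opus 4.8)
The plan is to establish the identity first for deterministic step functions, where the independent-increments structure makes the expectation factorize, and then to pass to a general continuous $f$ by uniform approximation. The cornerstone is the one-increment version of the formula: for $u\le v$ and $z\in\C$ with $\Re z\in[-(1+\varepsilon)M,(1+\varepsilon)M]$,
\begin{align*}
 \E\big[\e^{z(L_v-L_u)}\big]=\exp\int_u^v\theta_s(z)\,\ud s.
\end{align*}
For purely imaginary $z=iw$ this is exactly the characteristic-function representation recorded above (applied to the increment $L_v-L_u$). Assumption \EM\ guarantees, via \eqref{eq:Int}, that the moment generating function on the left is finite for real $z$ in the interval $[-(1+\varepsilon)M,(1+\varepsilon)M]$; hence both sides extend to bounded analytic functions of $z$ on the open strip $\{\,\Re z\in(-(1+\varepsilon)M,(1+\varepsilon)M)\,\}$, and since they agree on the imaginary axis they coincide throughout the strip by the identity theorem.

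Next I would treat a step function $f=\sum_j z_j\mathbf{1}_{(t_{j-1},t_j]}$ with $|\Re z_j|\le M$. Here $\int_0^t f(s)\,\ud L_s$ is the finite sum $\sum_j z_j\big(L_{t_j\wedge t}-L_{t_{j-1}\wedge t}\big)$, and because $L$ has independent increments the expectation of the exponential factorizes. Applying the one-increment identity to each factor gives
\begin{align*}
 \E\Big[\exp\int_0^t f(s)\,\ud L_s\Big]
 =\prod_j\exp\int_{t_{j-1}\wedge t}^{t_j\wedge t}\theta_s(z_j)\,\ud s
 =\exp\int_0^t\theta_s\big(f(s)\big)\,\ud s,
\end{align*}
which is the claim for step functions.

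Finally, for a general continuous $f$ on $[0,t]$ I would use uniform continuity to choose step functions $f_n\to f$ uniformly, sampling $f$ at partition points so that $|\Re f_n|\le M$ as well. On the right-hand side, continuity of $z\mapsto\theta_s(z)$ together with Assumption \AC\ and the exponential bound coming from \EM\ yields $\int_0^t\theta_s(f_n(s))\,\ud s\to\int_0^t\theta_s(f(s))\,\ud s$ by dominated convergence. On the left-hand side, the deterministic-integrand dominated convergence theorem for semimartingales gives $\int_0^t f_n\,\ud L_s\to\int_0^t f\,\ud L_s$ in probability, so everything reduces to interchanging limit and expectation.

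This interchange is the main obstacle, and it is precisely where the slack $(1+\varepsilon)$ in Assumption \EM\ is used. Since $L$ is real valued, $\big|\exp\int_0^t f_n\,\ud L_s\big|=\exp\int_0^t\Re f_n(s)\,\ud L_s$, and applying the step-function identity to $(1+\delta)\Re f_n$ gives, for any $\delta\in(0,\varepsilon]$,
\begin{align*}
 \E\Big[\big|\exp\!\int_0^t f_n\,\ud L_s\big|^{1+\delta}\Big]
 =\exp\int_0^t\theta_s\big((1+\delta)\Re f_n(s)\big)\,\ud s.
\end{align*}
Because $(1+\delta)|\Re f_n|\le(1+\varepsilon)M$, the integrand $\theta_s((1+\delta)\Re f_n(s))$ is dominated, uniformly in $n$, by a fixed function of $s$ that is integrable on $[0,t]$ (its exponential part being controlled by \eqref{eq:Int}). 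Hence the family $\{\exp\int_0^t f_n\,\ud L_s\}_n$ is bounded in $L^{1+\delta}$ and therefore uniformly integrable, which legitimizes passing to the limit on the left-hand side and completes the proof.
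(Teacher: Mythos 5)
Your proof is correct and is essentially the argument that the paper only cites (Lemma 3.1 of Eberlein and Raible (1999), Proposition 1.9 of Kluge (2005)): the increment identity extended from the imaginary axis to the strip by analyticity, factorization over step functions via independent increments, and passage to the limit using the uniform integrability supplied by the $(1+\varepsilon)$ slack in Assumption \EM. The only cosmetic quibble is that the two sides of the one-increment identity need not be \emph{bounded} on the strip --- analyticity plus agreement on the imaginary axis already suffices for the identity theorem.
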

\begin{proof}
The proof is similar to the proof of Lemma 3.1 in Eberlein and Raible
\citeyearNP{EberleinRaible99}; see also \citeN[Proposition 1.9]{Kluge05}.
\end{proof}

\section{L\'evy term structure models}
\label{LTSMs}

In this section we review two approaches to modeling the
term structure of interest rates, where the driving process is a
time-inhomogeneous \lev process.

\subsection{The \lev forward rate model}

In the \lev forward rate framework for modeling the term structure
of interest rates, the dynamics of forward rates are specified and
the prices of zero coupon bonds are then deduced. Let $T^*$ be a
fixed time horizon and assume that for every $T\in[0,T^*]$, there
exists a zero coupon bond maturing at $T$ traded in the market; in
addition, let $U\in[0,T^*]$.

The forward rates are driven by a time-inhomogeneous \lev process
$L=\LtT$ on the stochastic basis $(\Omega, \F, \fF,\P)$ with
semimartingale characteristics ($B,C,\nu$) or local characteristics
$(b,c,\lambda)$. The dynamics of the instantaneous continuously
compounded forward rates for $T\in[0,T^*]$ is given by
\begin{align}\label{HJM-fr}
f(t,T) = f(0,T) + \int_0^t \alpha(s,T)\ud s   %\partial_2 A
       - \int_0^t  \sigma(s,T)\ud L_s, \quad 0\leq t\leq T.
\end{align}
The initial values $f(0,T)$ are deterministic, and bounded and
measurable in $T$. In general, $\alpha$ and $\sigma$ are real-valued
stochastic processes defined on $\Omega\times[0,T^*]\times[0,T^*]$
that satisfy the following conditions:
\begin{description}
\item[(A1)] for $s>T$ we have $\alpha(\omega;s,T)=0$ and $\sigma(\omega;s,T)=0$.
\item[(A2)] $(\omega,s,T)\mapsto\alpha(\omega;s,T),\sigma(\omega;s,T)$
            are $\mathcal P\otimes\mathcal B([0,T^*])$-measurable.
\item[(A3)] $S(\omega):=\sup_{s,T\leq T^*}(|\alpha(\omega;s,T)|+|\sigma(\omega;s,T)|)<\infty$.
\end{description}
Then, \eqref{HJM-fr} is well defined and we can find a ``joint''
version of all $f(t,T)$ such that $(\omega;t,T)\mapsto
f(t,T)(\omega)1_{\{t\leq T\}}$ is $\mathcal O\otimes\mathcal
B([0,T^*])$-measurable. Here $\mathcal P$ and $\mathcal O$ denote
the predictable and optional $\sigma$-fields on $\Omega\times[0,T^*]$.

Taking the dynamics of the forward rates as the starting point,
explicit expressions for the dynamics of zero coupon bond prices and
the money market account can be deduced (cf. Proposition 5.2 in
\shortciteNP{BjoerkDimasiKabanovRunngaldier97}). From
\citeN[(2.6)]{EberleinKluge05}, we get that the time-$T$ price of a
zero coupon bond maturing at time $U$ is
\begin{align}\label{HJM-bondprices}
B(T,U) &= \frac{B(0,U)}{B(0,T)}
          \exp\left(\int_0^T\Sigma(s,T,U)\ud L_s -\int_0^T A(s,T,U)\ud s\right),
\end{align}
where the following abbreviations are used:
\begin{align*}
\Sigma(s,T,U) &:= \Sigma(s,U)-\Sigma(s,T),\\
A(s,T,U) &:= A(s,U) - A(s,T),
\end{align*}
and
\begin{align*}
A(s,T) := \int_{s\wedge T}^T \alpha(s,u)\ud u
\quad\mathrm{and}\quad
\Sigma(s,T) := \int_{s\wedge T}^T \sigma(s,u)\ud u.
\end{align*}
Similarly, using \citeN[(2.5)]{EberleinKluge05}, we have for the
money market account
\begin{align}\label{money-market}
B_T^M &= \frac{1}{B(0,T)}
         \exp\left(\int_0^T A(s,T)\ud s - \int_0^T\Sigma(s,T)\ud L_s\right).
\end{align}

In the sequel we will consider only deterministic volatility
structures. Therefore, $\Sigma$ and $A$ are assumed to be
deterministic real-valued functions defined on $\Delta:= \{(s, T)\in
[0,T^*]\times[0,T^*]; s\leq T\}$, whose paths are continuously
differentiable in the second variable. Moreover, they satisfy the
following conditions.
\begin{description}
\item[(B1)] The volatility structure $\Sigma$ is continuous in the first
            argument and bounded in the following way: for
            $(s, T)\in\Delta$  we have
            \begin{align*}
             0\leq \Sigma(s, T)\leq M,
            \end{align*}
            where $M$ is the constant from Assumption ($\mathbb{EM}$).
            Furthermore, we have that $\Sigma(s, T)\neq 0$ for $s<T$ and
            $\Sigma(T,T)=0$ for $T\in[0,T^*]$.
\item[(B2)] The drift coefficients $A(\cdot,T)$ are given by
            \begin{align}\label{HJM-drift}
             A(s, T) = \theta_s(\Sigma(s, T)),
            \end{align}
            where $\theta_s$ is the cumulant generating function associated
            with the triplet $(b_s,c_s,\lambda_s)$, $s\in[0,T]$.
\end{description}

\begin{remark}
The drift condition (\ref{HJM-drift}) guarantees that bond prices
discounted by the money market account are martingales; hence, $\P$
is a martingale measure. In addition, from Theorem 6.4 in
\shortciteN{EberleinJacodRaible05}, we know that the martingale
measure is \emph{unique}.
\end{remark}

\subsection{The \lev forward price model}

In the \lev forward price model the dynamics of forward prices, i.e.
ratios of successive bond prices, are specified. Let
$0=T_0<T_{1}<\cdots<T_{N}<T_{N+1}=T^*$ denote a discrete tenor
structure where $\delta_i=T_{i+1}-T_{i}$, $i\in\{0,1,\dots,N\}$; the
model is constructed via backward induction, hence we denote
by $T_j^*:= T_{N+1-j}$ for $j\in\{0,1,\dots,N+1\}$ and
$\delta^*_j:=\delta_{N+1-j}$ for $j\in\{1,\dots,N+1\}$.

Consider a complete stochastic basis $(\Omega, \F,\fF,\P_{T^*})$
and let $L=\LtT$ be a time-inhomogeneous \lev process satisfying Assumption
$(\mathbb{EM})$. $L$ has semimartingale characteristics
($0,C,\nu^{T^*}$) or local characteristics $(0,c,\lambda^{T^*})$ and
its canonical decomposition is
\begin{align}
L_t = \int_0^t \sqrt{c_s}\ud W_s^{T^*}
     + \int_0^t\int_{\R}x(\mu^L-\nu^{T^*})\dsdx,
\end{align}
where $W^{T^*}$ is a $\P_{T^*}$-standard Brownian motion, $\mu^L$ is
the random measure associated with the jumps of $L$ and $\nu^{T^*}$
is the $\P_{T^*}$-compensator of $\mu^L$. Moreover, we assume that
the following conditions are in force.
\begin{description}
\item[(FP1)] For any maturity $T_{i}$ there exists a bounded, continuous,
             deterministic function $\eta(\cdot,T_{i}):[0,T_i]\rightarrow\R$,
             which represents the volatility of the forward price
             process $F(\cdot, T_{i}, T_{i}+\delta_i)$. Moreover,
             we require that the volatility structure satisfies
             \begin{align*}
             \Big|\sum_{k=1}^i\eta(s,T_{k})\Big|\leq M, &\qquad
             \forall\; i\in\{1,\dots,N\},
             \end{align*}
             for all $s\in[0,T^*]$, where $M$ is the constant from Assumption
             ($\mathbb{EM}$) and $\eta(s,T_i)=0$ for all $s>T_i$.
\item[(FP2)] The initial term structure $B(0,T_i)$, $1\leq i\leq N+1$ is
             strictly positive. Consequently, the initial term structure of
             forward price processes is given, for $1\leq i\leq N$, by
             \begin{align*}
             F(0,T_i,T_i+\delta_i)=\frac{B(0,T_i)}{B(0,T_i+\delta_i)}.
             \end{align*}
\end{description}

The construction starts by postulating that the dynamics of the
forward process with the longest maturity $F(\cdot,T_1^*,T^*)$ are
driven by the time-inhomogeneous \lev process $L$, and evolve as a
martingale under the terminal forward measure $\P_{T^*}$. Then, the
dynamics of the forward processes for the preceding maturities are
constructed by backward induction; therefore, they are driven by the
same process $L$ and evolve as martingales under their associated
forward measures.

Let us denote by $\P_{T^*_{j-1}}$ the forward measure associated
with the settlement date $T^*_{j-1}$, $j\in\{1,\dots,N+1\}$. The
dynamics of the forward price process $F(\cdot,T^*_{j},T^*_{j-1})$
is given by
\begin{align*}
F(t,T^*_{j}\!,T^*_{j-1}) \!=\! F(0,T^*_{j}\!,T^*_{j-1})
        \exp\!\left(\int_0^t b(s,T^*_{j}\!,T^*_{j-1})\ud s\!
                 +\!\!\int_0^t\! \eta(s,T^*_j)\ud L^{T^*_{j-1}}_s\!\right)
\end{align*}
where
\begin{align*}
L^{T^*_{j-1}}_t &=
      \int_0^t\sqrt{c_s}\ud W^{T^*_{j-1}}_s
     +\int_0^t\intrr x(\mu^L-\nu^{T^*_{j-1}})\dsdx
\end{align*}
is a time-inhomogeneous \lev process. Here $W^{T^*_{j-1}}$ is a
$\P_{T^*_{j-1}}$-standard Brownian motion and  $\nu^{T^*_{j-1}}$ is
the $\P_{T^*_{j-1}}$-compensator of $\mu^L$. The forward price
process evolves as a martingale under its corresponding forward
measure, hence, we specify the drift of the forward price process to
be
\begin{align}\label{forwarddrift}
b(s,T^*_{j},T^*_{j-1})
 &=-\half(\eta(s,T^*_j))^2c_s \nonumber\\
 &\quad\;
   -\intrr\left(\e^{\thstj x}-1-\thstj x\right)\lambda_s^{T^*_{j-1}}(\ud x).
\end{align}

The forward measure $\MeaT$, which is defined on
$(\Omega,\F,(\F_t)_{0\leq t\leq T^*_{j-1}})$, is related to the
terminal forward measure $\P_{T^*}$ via
\begin{align*}
\frac{\ud\MeaT}{\ud\P_{T^*}}
 = \prod_{k=1}^{j-1}\frac{F(T^*_{j-1},T^*_k,T_{k-1}^*)}{F(0,T_k^*,T^*_{k-1})}
 = \frac{B(0,T^*)}{B(0,T_{j-1}^*)}\prod_{k=1}^{j-1}F(T^*_{j-1},T^*_k,T_{k-1}^*).
\end{align*}
In addition, the $\P_{T^*_{j-1}}$-Brownian motion is related to the
$\P_{T^*}$-Brownian motion via
\begin{align}\label{ch4:fp-brownians}
W_t^{T^*_{j-1}}
 & = W_t^{T^*_{j-2}} - \int_0^t \eta(s,T^*_{j-1})\sqrt{c_s}\ud s
   = \dots\nonumber\\
 & = W_t^{T^*} - \int_0^t \left(\sum_{k=1}^{j-1}\eta(s,T^*_{k})\right)\sqrt{c_s} \ud s.
\end{align}
Similarly, the $\P_{T^*_{j-1}}$-compensator of $\mu^L$ is related to
the $\P_{T^*}$-compensator of $\mu^L$ via
\begin{align}\label{ch4:fp-nus}
\nu^{T^*_{j-1}} \dsdx
  &= \exp\Big(\eta(s,T^*_{j-1})x\Big)\nu^{T^*_{j-2}}\dsdx
   =\dots \nonumber\\
  &= \exp\left(x\sum_{k=1}^{j-1}\eta(s,T^*_{k})\right)\nu^{T^*}\dsdx.
\end{align}

\begin{remark}\label{ch4:fp-driver}
The process $L=L^{T^*}$, driving the most distant forward price, and
$L^{T^*_{j-1}}$, driving the forward price
$F(\cdot,T^*_{j},T^*_{j-1})$, are both time-inhomogeneous \lev
processes, sharing the same martingale parts and differing only in
the finite variation parts. Applying Girsanov's theorem for
semimartingales yields that the $\P_{T^*_{j-1}}$-finite variation
part of $L$ is
\begin{align*}
\int_0^\cdot c_s\left(\sum_{k=1}^{j-1}\eta(s,T^*_{k})\right) \ds
 + \int_0^\cdot\intrr x\left(\exp\Big(x\sum_{k=1}^{j-1}\eta(s,T^*_{k})\Big)-1\right)\nu^{T^*}\dsdx.
\end{align*}
\end{remark}

\section{Valuation of options on compositions}
\label{compo}

Consider a discrete tenor structure $0=T_0<T_{1}<\cdots<T_{N}<T_{N+1}=T^*$,
where the accrual factor for the time period $[T_{i},T_{i+1}]$ is
$\delta_i=T_{i+1}-T_{i}$, $i\in\{0,1,\dots,N\}$ and let $L(s_i,T_i)$
denote the time-$s_i$ forward LIBOR for the time period $[T_i,T_{i+1}]$.
The \emph{composition} pays a floating rate, typically the LIBOR, compounded
on several consecutive dates. The rates are fixed at the dates $s_i\leq T_i$
and the value of the composition is
\begin{align*}
\prod_{i=1}^{N} \big(1+\delta_i L(s_i, T_i)\big);
\end{align*}
therefore, the composition equals an investment of one currency unit
at the LIBOR rate for $N$ consecutive periods. The value of the
composition is subjected to a cap (or floor) denoted by $K$ and is
settled in arrears, at time $T^*$. Hence, a \emph{cap on the composition}
pays off at maturity the excess of the composition over $K$, i.e.
\begin{align*}%\label{ch4:payoff}
\left(\prod_{i=1}^{N}\big(1+\delta_i L(s_i, T_i)\big)-K\right)^+,
\end{align*}
and similarly, the payoff of a \emph{floor on the composition} is
\begin{align*}
\left(K-\prod_{i=1}^{N}\big(1+\delta_i L(s_i, T_i)\big)\right)^+.
\end{align*}
Notice that without the cap (resp. floor), the payoff of the
composition would simply be that of a floating rate note, where
the proceeds are reinvested. Similarly, if we only consider a
single compounding date, then we are dealing
with a caplet (resp. floorlet), with strike
$\mathscr K:=\frac{K-1}\delta$.

In the following sections, we present methods for the valuation of a
cap on the composition in the L\'evy-driven forward rate and forward
price frameworks. The value of a floor on the composition can either
be deduced via analogous valuation formulae
or via the cap-floor parity for compositions, which reads
\begin{displaymath}
  \mathbf{C}(T^*;K) = \mathbf{F}(T^*;K) + B(0,T_1) - KB(0,T^*).
\end{displaymath}
Here $\mathbf{C}(T^*;K)$ and $\mathbf{F}(T^*;K)$ denote the time-$T_0$
value of a cap, resp. floor, on the composition with cap, resp.
floor, equal to $K$.

\subsection{Forward rate framework}

In this section we derive an explicit formula for the valuation of a
cap on the composition in the \lev forward rate model, making use of
the methods developed in \shortciteN{EberleinGlauPapapantoleon08}. As a special
case, we get valuation formulae for caplets in the \lev forward rate
framework that generalize the results of \citeN{EberleinKluge04},
since we do not require the existence of a Lebesgue density (which is
essential in the convolution representation of option prices; cf.
\citeNP[Chapter 3]{Raible00}).

Firstly, we calculate the quantity that appears in the composition.
By an elementary calculation, we have that
\begin{align*}
\frac{B(s_i,T_i)}{B(s_i,T_{i+1})}
 = \frac{B(0,T_i)}{B(0,T_{i+1})}
   \exp\left( \int_0^{s_i}A(s,T_i,T_{i+1})\ud s
             -\int_0^{s_i}\Sigma(s,T_i,T_{i+1})\ud L_s\right)\!\!.
\end{align*}
Using the fact that $1+\delta_i
L(s_i,T_i)=\frac{B(s_i,T_i)}{B(s_i,T_{i+1})}$ we immediately get
\begin{align*}
\prod_{i=1}^{N}\! \big(1\! + \!\delta_i L(s_i, T_i)\big)\!
 &= \prod_{i=1}^{N}\frac{B(s_i,T_i)}{B(s_i,T_{i+1})}\\
 &= \frac{B(0,T_1)}{B(0,T^*)}\\
 &\times
   \exp\!\left(\sum_{i=1}^{N} \int_0^{s_i}A(s,T_i,T_{i+1})\ud s\!
              -\!\sum_{i=1}^{N} \int_0^{s_i}\Sigma(s,T_i,T_{i+1})\ud L_s\right)\!\!.
\end{align*}
Next, we define the forward measure associated with the date $T^*$
via the Radon--Nikodym derivative
\begin{align*}
\frac{\ud\P_{T^*}}{\ud\P}
 &:= \frac{1}{B^M_{T^*}B(0,T^*)}\\
 & \phantom{:}=
     \exp\left(-\int_0^{T^*} A(s,T^*)\ud s
               +\int_0^{T^*}\Sigma(s,T^*)\ud L_s\right).
\end{align*}
The measures $\P$ and $\P_{T^*}$ are equivalent, since the density
is strictly positive; moreover, we immediately note that
$\E\big[\frac{1}{B^M_{T^*}B(0,T^*)}\big]=1$. The density process
related to the change of measure is given by the restriction of the
Radon--Nikodym derivative to the $\sigma$-field $\F_t$, $t\leq T^*$,
therefore
\begin{align*}
\E\left[ \frac{\ud\P_{T^*}}{\ud\P}\Big|\F_t \right]
 &= \frac{B(t,T^*)}{B^M_{t}B(0,T^*)}\\
 &= \exp\left(-\int_0^{t} A(s,T^*)\ud s
               +\int_0^{t}\Sigma(s,T^*)\ud L_s\right).
\end{align*}
This allows us to determine the tuple of functions that characterize
the process $L$ under this change of measure and we can conclude,
using Theorems III.3.24 and II.4.15 in \citeN{JacodShiryaev03}, that
the driving process $L=(L_t)_{t\in[0,T^*]}$ remains a
time-inhomogeneous \lev process under the measure $\P_{T^*}$.

According to the first fundamental theorem of asset pricing the price
of an option on the composition is equal to its discounted expected
payoff under the martingale measure. Combined with the forward
measure defined above, this gives
\begin{align*}
\mathbf{C}(T^*;K)
        &= \E_{\P}\left[\frac{1}{B^M_{T^*}}
           \left(\prod_{i=1}^{N}\frac{B(s_i,T_i)}{B(s_i,T_{i+1})}-K\right)^+\right]\\
        &= B(0,T^*) \E_{\P_{T^*}}
           \left[\left(\prod_{i=1}^{N}\frac{B(s_i,T_i)}{B(s_i,T_{i+1})}-K\right)^+\right]\\
        &= B(0,T^*) \E_{\P_{T^*}}\left[\left(\exp H -K\right)^+\right],
\end{align*}
where the random variable $H$ is defined as
\begin{align*}%\label{ch4:HJM-compo-H}
 H := \log\frac{B(0,T_1)}{B(0,T^*)}
    + \sum_{i=1}^{N} \int_0^{s_i}A(s,T_i,T_{i+1})\ud s
    - \sum_{i=1}^{N} \int_0^{s_i}\Sigma(s,T_i,T_{i+1})\ud L_s.
\end{align*}

Let us denote by $M^{T^*}_{H}$ the moment generating function of $H$
under the measure $\P_{T^*}$. The next theorem provides an
analytical expression for the value of a cap on the composition.
Preceding that, we provide an expression for $M^{T^*}_{H}(z)$ for
suitable complex arguments $z$.

\begin{lemma}\label{comp-HJM-mgf}
Let $M$  and $\varepsilon$ be suitably chosen such that
$\Sigma(s,T)\leq M'$ for all $s,T\in[0,T^*]$ and
$\Sigma(s,T_{i+1})1_{[s_i,s_{i+1}]}(s)\leq M''$ for all
$s,s_i,T_{i+1}\in[0,T^*]$, where $0<M''<M'<M$ and
$\frac{M}{M''}>N+1$. Then, for each
$R\in\mathcal{I}_2=[1-\frac{M-M''(N+1)}{M'+M''(N+1)},1+\frac{M-M''(N+1)}{M'+M''(N+1)}]$,
we have that $M^{T^*}_H(R)<\infty$ and for every $z\in\C$ with $\Re
z=R$
\begin{align*}
M^{T^*}_H(z) &= \mathcal Z^z
 \exp \left( \int_0^{T^*}\left( z \sum^{N}_{i=1} A(s,T_i,T_{i+1})1_{[0,s_i]}(s)
             - \theta_s\Big(\Sigma(s,T^*)\Big)
 \right.\right.\nonumber\\
 &\qquad\qquad   \left.\phantom{\int_0^{T^*}}\left.
                 + \theta_s\Big(\Sigma(s,T^*)-z\sum^{N}_{i=1}\Sigma(s,T_i,T_{i+1})1_{[0,s_i]}(s)\Big)
  \right)\ds\right)
\end{align*}
where $\mathcal Z:=\frac{B(0,T_1)}{B(0,T^*)}$.
\end{lemma}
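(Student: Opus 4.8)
I would start from the definition of the moment generating function under the forward measure, namely $M^{T^*}_H(z)=\E_{\P_{T^*}}[\e^{zH}]$, and substitute the explicit expression for $H$. Since the deterministic part $\log\mathcal Z+\sum_i\int_0^{s_i}A(s,T_i,T_{i+1})\ud s$ factors out of the expectation, the task reduces to computing
\[
\E_{\P_{T^*}}\Bigg[\exp\Bigg(-z\sum_{i=1}^N\int_0^{s_i}\Sigma(s,T_i,T_{i+1})\ud L_s\Bigg)\Bigg].
\]
The first cosmetic step is to rewrite each $\int_0^{s_i}\Sigma(s,T_i,T_{i+1})\ud L_s$ as $\int_0^{T^*}\Sigma(s,T_i,T_{i+1})1_{[0,s_i]}(s)\ud L_s$, so that all the stochastic integrals share a common upper limit $T^*$ and can be collected into a single integrand $g(s):=-z\sum_{i=1}^N\Sigma(s,T_i,T_{i+1})1_{[0,s_i]}(s)$ integrated against $L$ over $[0,T^*]$.

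\emph{The measure change.}
The essential point is that $L$ is \emph{not} a martingale (nor a \lev process) under the original measure $\P$ in the form needed; the expectation must be evaluated under $\P_{T^*}$. I would use the explicit density process computed in the excerpt, together with the observation — justified there via Theorems III.3.24 and II.4.15 of \citeN{JacodShiryaev03} — that $L$ remains a \tih \lev process under $\P_{T^*}$, whose cumulant is obtained from $\theta_s$ by an Esscher-type shift in the direction $\Sigma(s,T^*)$. Concretely, I expect that for a deterministic integrand $h$ one has
\[
\E_{\P_{T^*}}\Bigg[\exp\int_0^{T^*}h(s)\ud L_s\Bigg]
 =\exp\int_0^{T^*}\Big(\theta_s\big(h(s)+\Sigma(s,T^*)\big)-\theta_s\big(\Sigma(s,T^*)\big)\Big)\ud s,
\]
the $-\theta_s(\Sigma(s,T^*))$ term being exactly the normalization coming from the density $\exp(\int_0^{T^*}\Sigma(s,T^*)\ud L_s-\int_0^{T^*}A(s,T^*)\ud s)$ with $A(s,T^*)=\theta_s(\Sigma(s,T^*))$ by the drift condition (B2). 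Applying this with $h(s)=g(s)$, and noting that $g(s)+\Sigma(s,T^*)=\Sigma(s,T^*)-z\sum_i\Sigma(s,T_i,T_{i+1})1_{[0,s_i]}(s)$, reproduces precisely the two $\theta_s$ terms in the claimed formula.

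\emph{Justifying the cumulant identity and the domain.}
The cleanest way to make the displayed Esscher identity rigorous is to reduce it to Lemma \ref{log-mom}: the density itself is of the form $\exp\int_0^{T^*}\Sigma(s,T^*)\ud L_s$ times a deterministic factor, so $\E_{\P_{T^*}}[\exp\int_0^{T^*}h\,\ud L]$ equals $\E_{\P}$ of $\exp\int_0^{T^*}(h+\Sigma(\cdot,T^*))\ud L$ divided by the corresponding normalizer, and each $\E_{\P}$ is evaluated by Lemma \ref{log-mom}. The real work, and the step I expect to be the main obstacle, is verifying the hypothesis of Lemma \ref{log-mom}, which requires $|\Re(h(s)+\Sigma(s,T^*))|\le M$ for the relevant arguments. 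This is exactly what the elaborate constants $M',M'',\mathcal I_2$ in the statement are arranged to guarantee: with $\Re z=R\in\mathcal I_2$ one bounds
\[
\Big|\Re\Big(\Sigma(s,T^*)-z\sum_{i=1}^N\Sigma(s,T_i,T_{i+1})1_{[0,s_i]}(s)\Big)\Big|
 \le M'+|R|\,M''(N+1),
\]
and a short computation shows the choice $R\in[1-\tfrac{M-M''(N+1)}{M'+M''(N+1)},\,1+\tfrac{M-M''(N+1)}{M'+M''(N+1)}]$ (using $M/M''>N+1$ so the interval is nonempty) keeps this bound below $M$, placing the argument inside the admissible strip where $\theta_s$ is finite and Assumption \EM\ applies. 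I would therefore carry out the estimate on $\Re z$ first to fix the domain, then invoke Lemma \ref{log-mom} on each expectation, and finally reassemble the deterministic prefactor $\mathcal Z^z\exp(z\sum_i\int_0^{T^*}A(s,T_i,T_{i+1})1_{[0,s_i]}(s)\ud s)$ to obtain the stated expression.
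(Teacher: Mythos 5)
Your overall strategy is the same as the paper's: factor out the deterministic part of $H$, pass from $\P_{T^*}$ back to $\P$ via the explicit density (whose normalizer $-\int_0^{T^*}A(s,T^*)\ds$ with $A(s,T^*)=\theta_s(\Sigma(s,T^*))$ is exactly what produces the $-\theta_s(\Sigma(s,T^*))$ term), collect everything into a single stochastic integral with the deterministic integrand $\Sigma(s,T^*)-z\sum_{i}\Sigma(s,T_i,T_{i+1})1_{[0,s_i]}(s)$, and apply Lemma \ref{log-mom}. That part is correct and matches the paper.

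The gap is in the one step you yourself identify as ``the real work'': verifying that the real part of the integrand lies in $[-M,M]$. Your bound
\begin{equation*}
\Big|\Re\Big(\Sigma(s,T^*)-z\sum_{i=1}^N\Sigma(s,T_i,T_{i+1})1_{[0,s_i]}(s)\Big)\Big|\le M'+|R|\,M''(N+1)
\end{equation*}
does not stay below $M$ on $\mathcal I_2$: already at the midpoint $R=1$ it equals $M'+M''(N+1)$, and the hypotheses ($M''<M'<M$ and $M>M''(N+1)$) do not imply $M'+M''(N+1)\le M$ (take $M=1$, $M'=0.9$, $M''=0.1$, $N+1=5$). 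In addition, the bound $M''(N+1)$ for $|\sum_i\Sigma(s,T_i,T_{i+1})1_{[0,s_i]}(s)|$ is itself unjustified, since the hypothesis $\Sigma(s,T_{i+1})1_{[s_i,s_{i+1}]}(s)\le M''$ controls $\Sigma(s,T_{i+1})$ only on the short interval $[s_i,s_{i+1}]$, while your sum involves it on all of $[0,s_i]$, where only the bound $M'$ is available. The paper's estimate goes through because of a telescoping identity: with $\underline{\overline{\Sigma}}(s,T)=\sum_{i=0}^{N}\Sigma(s,T_{i+1})1_{[s_i,s_{i+1}]}(s)$ one has $\sum_{i=1}^N\Sigma(s,T_i,T_{i+1})1_{[0,s_i]}(s)=\Sigma(s,T^*)-\underline{\overline{\Sigma}}(s,T)$, so the integrand equals $(1-z)\big(\Sigma(s,T^*)-\underline{\overline{\Sigma}}(s,T)\big)+\underline{\overline{\Sigma}}(s,T)$. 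The large quantity $\Sigma(s,T^*)-\underline{\overline{\Sigma}}(s,T)$ (of size at most $M'+M''(N+1)$) is thereby multiplied by the \emph{small} factor $|1-R|\le\frac{M-M''(N+1)}{M'+M''(N+1)}$, and the leftover $\underline{\overline{\Sigma}}(s,T)$ is controlled by $M''(N+1)$; the sum is exactly $M$. This cancellation at $z=1$ is the reason $\mathcal I_2$ is an interval centered at $1$, and a plain triangle inequality in $|R|$ cannot recover it. You need to insert this regrouping before invoking Lemma \ref{log-mom}; otherwise the application of that lemma (and the finiteness of $M^{T^*}_H(R)$ on $\mathcal I_2$) is not justified.
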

\begin{proof}
Fix an $R\in\mathcal{I}_2$. Then, for
$z\in\C$ with $\Re z=R$, and denoting by
$\underline{\overline{\Sigma}}(s,T)=\sum_{i=0}^{N}\Sigma(s,T_{i+1})1_{[s_i,s_{i+1}]}(s)$,
we get that
\begin{eqnarray}\label{ch4:M-bound-I}
\lefteqn{\Big|\Re\Big(-z\sum^{N}_{i=1}\Sigma(s,T_i,T_{i+1})1_{[0,s_i]}(s)\Big)+\Sigma(s,T^*)\Big|}
  \nonumber\\
 &=& \Big|\Re\Big(z\sum_{i=0}^{N}\Sigma(s,T_{i+1})1_{[s_i,s_{i+1}]}(s)-z\Sigma(s,T^*)\Big)
           +\Sigma(s,T^*)\Big|\nonumber\\
 &=& \Big|\Re\Big((1-z)\big(\Sigma(s,T^*)-\underline{\overline{\Sigma}}(s,T)\big)\Big)
          +\underline{\overline{\Sigma}}(s,T)\Big| \nonumber\\
 &\leq& |1-R||\Sigma(s,T^*)-\underline{\overline{\Sigma}}(s,T)|
          + |\underline{\overline{\Sigma}}(s,T)|   \nonumber\\
 &\leq& \frac{M-M''(N+1)}{M'+M''(N+1)}(M'+M''(N+1)) + M''(N+1)
  =M.
\end{eqnarray}
Now, define the constants
\begin{align*}
\mathcal Z_0:=\exp\left(z\bigg(\log\frac{B(0,T_1)}{B(0,T^*)}
  + \int_0^{T^*}\sum^{N}_{i=1}A(s,T_i,T_{i+1})1_{[0,s_i]}(s)\ds\bigg)\right)
\end{align*}
and
 $\mathcal Z_1:=\mathcal Z_0\times\exp\left(-\int_0^{T^*}A(s,T^*)\ds\right)$.
Hence, the moment generating function of $H$ is
\begin{align*}
M^{T^*}_H(z)
 &= \E_{\P_{T^*}}\Big[\exp(zH)\Big]\\
 &= \E_{\P_{T^*}}\left[\exp\left(z\bigg(\log\frac{B(0,T_1)}{B(0,T^*)}
                         + \sum_{i=1}^{N} \int_0^{s_i}A(s,T_i,T_{i+1})\ud s\right.\right.\\
 & \qquad\qquad\qquad\qquad\qquad\left.\left.
                         - \sum_{i=1}^{N} \int_0^{s_i}\Sigma(s,T_i,T_{i+1})\ud L_s\bigg)\right)\right]\\
 &= \exp\left(-\int_0^{T^*}A(s,T^*)\ds\right) \times \mathcal Z_0\\
 & \qquad\times
    \E_{\P}\left[\exp\left(-z\sum_{i=1}^{N} \int_0^{s_i}\Sigma(s,T_i,T_{i+1})\ud L_s
                           +\int_0^{T^*}\Sigma(s,T^*)\ud L_s\right)\right]\\
 &= \mathcal Z_1 \E_{\P}\left[\exp\int_0^{T^*}\left(-z\sum_{i=1}^{N} \Sigma(s,T_i,T_{i+1})1_{[0,s_i]}(s)
                                          +\Sigma(s,T^*)\right)\ud L_s\right]\\
 &= \mathcal Z_1 \exp\int_0^{T^*}\left(
                  \theta_s\Big(-z\sum^{N}_{i=1}\Sigma(s,T_i,T_{i+1})1_{[0,s_i]}(s)
                           +\Sigma(s,T^*)\Big) \right)\ds,
\end{align*}
where for the last equality we have applied Lemma \ref{log-mom},
which is justified by \eqref{ch4:M-bound-I}. In addition, we get
that $M^{T^*}_H(R)<\infty$ for $R\in\mathcal I_2$.
\end{proof}

\begin{theorem}
Assume that forward rates are modeled according to the \lev forward
rate model. The price of a cap on the composition is
\begin{align*}
\mathbf{C}(T^*;K)
 = \frac{B(0,T^*)}{2\pi}
   \int_{\R} M^{T^*}_{H}(R-iu) \frac{K^{1+iu-R}}{(iu-R)(1+iu-R)}\ud u,
\end{align*}
where $M^{T^*}_{H}$ is given by Lemma \ref{comp-HJM-mgf} and
$R\in(1,1+\frac{M-M''(N+1)}{M'+M''(N+1)}]$. %I_1\cap\mathcal{I}_2
\end{theorem}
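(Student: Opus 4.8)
The plan is to apply the Fourier-transform valuation method of \shortciteN{EberleinGlauPapapantoleon08} to the representation
\begin{align*}
\mathbf{C}(T^*;K) = B(0,T^*)\,\E_{\P_{T^*}}\big[(\e^{H}-K)^+\big]
\end{align*}
derived above, with damped payoff function $f(x)=(\e^{x}-K)^+$. First I would record the (generalized) Fourier transform of $f$. For any $z\in\C$ with $\Im z=R>1$ one computes directly that
\begin{align*}
\widehat f(z) = \int_{\R}\e^{izx}(\e^{x}-K)^+\ud x = \frac{K^{1+iz}}{iz(1+iz)},
\end{align*}
the restriction $R>1$ being exactly what forces the integral over $(\log K,\infty)$ to converge at $+\infty$. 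Evaluating at $z=u+iR$ and using $i(u+iR)=iu-R$ yields
\begin{align*}
\widehat f(u+iR) = \frac{K^{1+iu-R}}{(iu-R)(1+iu-R)},
\end{align*}
which is precisely the kernel appearing in the statement.

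The second step is to verify the hypotheses of the Fourier method so that the Parseval-type identity
\begin{align*}
\E_{\P_{T^*}}\big[f(H)\big]
 = \frac{1}{2\pi}\int_{\R} M^{T^*}_H(R-iu)\,\widehat f(u+iR)\,\ud u
\end{align*}
holds. Three conditions must be checked. (i) The damped payoff $g(x):=\e^{-Rx}f(x)$ lies in $L^1(\R)$: it vanishes for $x\le\log K$ and behaves like $\e^{(1-R)x}$ as $x\to\infty$, which is integrable precisely because $R>1$. (ii) The moment generating function $M^{T^*}_H$ is finite at the damping parameter $R$; this is supplied by Lemma \ref{comp-HJM-mgf}, which guarantees $M^{T^*}_H(R)<\infty$ for every $R$ in the interval $\mathcal I_2$, and the prescribed range $R\in(1,1+\frac{M-M''(N+1)}{M'+M''(N+1)}]$ is contained in $\mathcal I_2$. (iii) The map $u\mapsto\widehat f(u+iR)$ is integrable on $\R$, since it decays like $|u|^{-2}$ as $|u|\to\infty$. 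Condition (iii) is what allows us to dispense with the assumption that $H$ possesses a Lebesgue density: integrability of the transformed payoff, rather than of a density, is the ingredient that legitimizes the interchange of expectation and $u$-integration via Fubini's theorem.

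Granting these, I would substitute $f(x)=\e^{-iux}\e^{Rx}\widehat f(u+iR)$ through Fourier inversion, interchange $\E_{\P_{T^*}}$ with $\int_\R\cdot\,\ud u$, and identify $\E_{\P_{T^*}}[\e^{(R-iu)H}]=M^{T^*}_H(R-iu)$, which produces the displayed identity. Multiplying by $B(0,T^*)$ and inserting the explicit kernel from the first step then gives the asserted formula.

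The only genuinely delicate point is the compatibility of the two constraints on $R$: damping forces $R>1$, while finiteness of the moment generating function forces $R\le 1+\frac{M-M''(N+1)}{M'+M''(N+1)}$. That this interval is non-empty is exactly the content of the condition $\frac{M}{M''}>N+1$ imposed in Lemma \ref{comp-HJM-mgf}, which ensures $\frac{M-M''(N+1)}{M'+M''(N+1)}>0$. Establishing that the prescribed range of $R$ lies in the admissible region is therefore the main obstacle, and it reduces to the bookkeeping already carried out in the proof of that lemma.
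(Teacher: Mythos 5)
Your proposal is correct and follows essentially the same route as the paper: it computes the generalized Fourier transform of the call payoff $(\e^x-K)^+$ for $\Im z>1$ and then invokes the Fourier valuation theorem (Theorem 2.2 of Eberlein, Glau, and Papapantoleon) for $R$ in the intersection of $(1,\infty)$ with the interval $\mathcal I_2$ from Lemma \ref{comp-HJM-mgf}. The only difference is that you spell out the verification of the theorem's prerequisites (integrability of the dampened payoff and of its transform, finiteness of $M^{T^*}_H(R)$), which the paper leaves implicit.
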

\begin{proof}
Firstly, let us recall that the Fourier transform of the payoff function
$f(x)=(\e^x-K)^+$, $K\in\Rp$, corresponding to a call option is
\begin{align}\label{call-fourier}
 \widehat{f}(z)= \frac{K^{1+iz}}{iz(1+iz)},
\end{align}
for $z\in\C$ with $\Im z\in(1,\infty)=:\mathcal I_1$; cf. Example 3.15
in \citeN{Papapantoleon06}.

Now, since the prerequisites of Theorem 2.2 in
\shortciteN{EberleinGlauPapapantoleon08} are satisfied for
$R\in \mathcal I_1\cap\mathcal{I}_2$, we immediately have that
\begin{align*}
\mathbf{C}(T^*;K)
  &= B(0,T^*) \E_{\P_{T^*}}\left[\left(\e^H -K\right)^+\right]\\
  &= \frac{B(0,T^*)}{2\pi} \int_{\R} M^{T^*}_{H}(R-iu) \frac{K^{1+iu-R}}{(iu-R)(1+iu-R)}\ud u,
\end{align*}
and the assertion is proved.
\end{proof}

\subsection{Forward price framework}

The aim of this section is to derive an explicit formula for the
valuation of a cap on the composition in the \lev forward price
model. Once again, the valuation formulae will be based on the
methods developed in \shortciteN{EberleinGlauPapapantoleon08}.

We begin by noticing that the quantity that appears in the
composition can be expressed in terms of forward prices, since
\[
1+\delta_i L(\cdot,T_i) = F(\cdot,T_i,T_{i+1}),
\]
and the forward prices are the modeling object in this framework. We
know that each forward price process evolves as a martingale under
its corresponding forward measure; moreover, we know that all
forward price processes are driven by the same time-inhomogeneous
\lev process (see also Remark \ref{ch4:fp-driver}). Therefore, we
will carry out the following program to arrive at the valuation
formulae:
\begin{enumerate}
\item lift all forward price processes from their forward measure to
      the terminal forward measure;
\item calculate the product of the composition factors;
\item price the cap on the composition as a call option on this product.
\end{enumerate}

Appealing to the structure of the forward price process and the
connection between the Brownian motions and the compensators under
the different measures, cf. equations \eqref{ch4:fp-brownians} and
\eqref{ch4:fp-nus}, we get that
\begin{align}\label{ch4:prod1}
F(t,T^*_{j}\!,T^*_{j-1})
 &= F(0,T^*_{j}\!,T^*_{j-1})
        \exp\!\left(\int_0^t b(s,T^*_{j}\!,T^*_{j-1})\ud s
                     +\!\int_0^t\! \eta(s,T^*_j)\ud L^{T^*_{j-1}}_s\!\right)\nonumber\\
 &= F(0,T^*_{j}\!,T^*_{j-1})
        \exp\!\left(\int_0^t b(s,T^*_{j}\!,T^*)\ud s
                     +\!\int_0^t\! \eta(s,T^*_j)\ud L^{T^*}_s\!\right),
\end{align}
for all $j\in{\{1,\dots,N+1\}}$. Here $L^{T^*}$ is the driving \tih \lev process with
$\P_{T^*}$-canonical decomposition
\begin{align}\label{ch4:fp-driverII}
L^{T^*}_t = \int_0^t \sqrt{c_s}\ud W_s^{T^*}
           + \int_0^t\int_{\R}x(\mu^L-\nu^{T^*})\dsdx,
\end{align}
and the drift term of the forward process
$F(\cdot,T^*_{j},T^*_{j-1})$ under the terminal measure $\P_{T^*}$,
is
\begin{align}\label{ch4:prod2}
b(s,T^*_{j}\!,T^*) &= -c_s \left(\frac12(\eta(s,T_j^*))^2
                         + \eta(s,T_j^*)\sum_{k=1}^{j-1}\eta(s,T_k^*)\right)\nonumber\\
                   &\quad
     -\int_\R\left(\big(\e^{x\eta(s,T_j^*)}-1\big)\e^{x\sum_{k=1}^{j-1}\eta(s,T_k^*)}
                     -x\eta(s,T_j^*)\right) \lambda_s^{T^*}(\dx).
\end{align}
It is immediately obvious from \eqref{ch4:prod1},
\eqref{ch4:fp-driverII} and \eqref{ch4:prod2} that
$F(\cdot,T^*_{j},T^*_{j-1})$ is not a $\P_{T^*}$-martingale, unless
$j=1$ (where we use the convention that $\sum_{j=1}^0=0$).

Now, the composition takes the following form
\begin{align}\label{ch4:compo-fp-prod}
\prod_{i=1}^{N}\big(1+\delta_i L(s_i, T_i)\big)
 &= \prod_{j=1}^{N}F(s^*_j,T^*_{j},T^*_{j-1})\nonumber\\
 &= \frac{B(0,T_N^*)}{B(0,T^*)}\\\nonumber
 &\;\;\times
  \exp\left( \sum_{j=1}^{N} \int_0^{s_j^*}b(s,T_j^*,T^*)\ud s
              +\sum_{j=1}^{N} \int_0^{s_j^*}\eta(s,T_j^*)\ud L^{T^*}_s\right)\!\!.
\end{align}
where $s_j^*=s_{N+1-j}$, $j\in\{1,\cdots,N\}$. Define the random variable
\begin{align}\label{ch4:compo-fp-H}
H:= \log \frac{B(0,T_N^*)}{B(0,T^*)}
  + \sum_{j=1}^{N} \int_0^{s_j^*}b(s,T_j^*,T^*)\ud s
  + \sum_{j=1}^{N} \int_0^{s_j^*}\eta(s,T_j^*)\ud L^{T^*}_s
\end{align}
and now we can express the option on the composition as an option
depending on this random variable. The next theorem provides a
formula for the valuation of a cap on the composition.

\begin{theorem}
Let forward prices be modeled according to the \lev forward
process framework. Then, the price of a cap on the composition is
\begin{align}\label{ch4:fp-price-compo}
 \mathbf{C}(T^*;K)
  = \frac{B(0,T^*)}{2\pi}
    \int_\R M_H(R-iu)\frac{K^{1+iu-R}}{(iu-R)(1+iu-R)}\ud u,
\end{align}
where the moment generating function of $H$ is given by Lemma
\ref{ch4:compo-fp-mgf} and $R\in(1,\frac{M}{M'}]$.
\end{theorem}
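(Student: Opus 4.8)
The plan is to run exactly the argument used for the forward rate theorem, now under the terminal forward measure $\P_{T^*}$ of the forward price model. First I would record the pricing identity. By the fundamental theorem of asset pricing the time-$T_0$ value of the cap is the expectation of its discounted payoff under the martingale measure associated with the chosen numeraire; taking the bond $B(\cdot,T^*)$ as numeraire and $\P_{T^*}$ as the associated forward measure, and inserting the representation \eqref{ch4:compo-fp-prod} of the composition together with the definition \eqref{ch4:compo-fp-H} of $H$, which gives $\prod_{i=1}^{N}(1+\delta_i L(s_i,T_i))=\e^H$, one obtains
\begin{align*}
\mathbf{C}(T^*;K)=B(0,T^*)\,\E_{\P_{T^*}}\!\left[\left(\e^H-K\right)^+\right].
\end{align*}
This reduces the cap to a plain vanilla call written on the single random variable $H$, mirroring the forward rate situation.

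Next I would invoke the Fourier pricing machinery of \shortciteN{EberleinGlauPapapantoleon08}. I recall from \eqref{call-fourier} that the Fourier transform of the call payoff $f(x)=(\e^x-K)^+$ is $\widehat f(z)=K^{1+iz}/\big(iz(1+iz)\big)$, valid for $z\in\C$ with $\Im z\in(1,\infty)=\mathcal I_1$. The representation \eqref{ch4:fp-price-compo} then follows from Theorem 2.2 in \shortciteN{EberleinGlauPapapantoleon08} once its two hypotheses are verified for the damping exponent $R=\Im z$: existence of $\widehat f$, which forces $R>1$, i.e. $R\in\mathcal I_1$; and finiteness of the moment generating function $M_H(R)$, which is the content of Lemma \ref{ch4:compo-fp-mgf} and holds for $R$ in an interval whose right endpoint is $M/M'$. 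Intersecting the two constraints yields the admissible range $R\in(1,\tfrac{M}{M'}]$, and substituting $\widehat f$ into the general Fourier inversion formula produces \eqref{ch4:fp-price-compo}.

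The main obstacle lies in Lemma \ref{ch4:compo-fp-mgf}, i.e. in establishing $M_H(R)<\infty$ together with the closed form of $M_H$. Computing $M_H(z)=\E_{\P_{T^*}}[\e^{zH}]$ by means of Lemma \ref{log-mom} requires that the complex integrand of the stochastic integral, namely $z\sum_{j=1}^{N}\eta(s,T_j^*)1_{[0,s_j^*]}(s)$, have real part bounded in modulus by $M$, so that $\theta_s$ remains inside its domain of analyticity $\Re z\in[-(1+\varepsilon)M,(1+\varepsilon)M]$; the deterministic drift terms $b(s,T_j^*,T^*)$ and the constant $B(0,T_N^*)/B(0,T^*)$ only contribute a prefactor and cause no difficulty. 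This is the forward price analogue of the estimate \eqref{ch4:M-bound-I}, and it is precisely where the volatility bound $\big|\sum_{k=1}^i\eta(s,T_k)\big|\le M$ from (FP1), together with the auxiliary constant $M'$ bounding the relevant volatility sum, enters and pins down the endpoint $M/M'$. Once this bound is secured, Lemma \ref{log-mom} yields both the closed-form expression for $M_H$ and the finiteness $M_H(R)<\infty$, exactly as in Lemma \ref{comp-HJM-mgf}, and the remaining steps are the routine substitution described above.
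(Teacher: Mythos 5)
Your proposal is correct and follows essentially the same route as the paper: reduce the cap to a call on the random variable $H$ via \eqref{ch4:compo-fp-prod} and \eqref{ch4:compo-fp-H}, then apply Theorem 2.2 of \shortciteN{EberleinGlauPapapantoleon08} with the payoff transform \eqref{call-fourier}, with the admissible strip $R\in(1,\tfrac{M}{M'}]$ coming from intersecting $\mathcal I_1$ with the domain of finiteness of $M_H$ supplied by Lemma \ref{ch4:compo-fp-mgf}. Your additional sketch of how the bound $\big|\Re\big(z\sum_j\eta(s,T_j^*)1_{[0,s_j^*]}(s)\big)\big|\le M$ justifies Lemma \ref{log-mom} matches the paper's proof of that lemma as well.
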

\begin{proof}
The option on the composition is priced under the terminal forward
martingale measure $\P_{T^*}$. Using \eqref{ch4:compo-fp-prod} and
\eqref{ch4:compo-fp-H}, we can express the cap on the composition as
a call option depending on the random variable $H$. Then we get
\begin{align*}
\mathbf{C}(T^*;K)
 &= B(0,T^*)\E_{\P_{T^*}}\left[\left(\prod_{j=1}^{N}F(s^*_j,T^*_{j},T^*_{j-1})-K\right)^+\right]\\
 &= B(0,T^*)\E_{\P_{T^*}}\left[\left(\e^H-K\right)^+\right]\\
 & = \frac{B(0,T^*)}{2\pi}
    \int_\R M_H(R-iu)\frac{K^{1+iu-R}}{(iu-R)(1+iu-R)}\ud u
\end{align*}
where we have applied Theorem 2.2 in \shortciteN{EberleinGlauPapapantoleon08}
and used \eqref{call-fourier} once again.
\end{proof}

\begin{lemma}\label{ch4:compo-fp-mgf}
Let $M$ and $\varepsilon$ be suitably chosen such that
$|\sum_{k=1}^N\eta(s,T_k)|\leq M'$ for some $M'<M$ and for all
$s\in[0,T^*]$. Then, for each $R\in[0,\frac{M}{M'}]$ we have that
$M_H(R)<\infty$, and for every $z\in\C$ with $\Re
z\in[0,\frac{M}{M'}]$ the moment generating function of $H$ is
\begin{align*}
M_H(z)
 &= \mathcal Z^z
     \exp\left(\int_{0}^{T^*}
           \bigg(z\sum_{j=1}^{N}b(s,T^*_j,T^*)
                 +\theta_s^{T^*}\Big(z\sum_{j=1}^{N}\eta(s,T^*_j)\Big)
           \bigg)\ds\right),
\end{align*}
where $\mathcal Z=\frac{B(0,T^*_N)}{B(0,T^*)}$ and $\theta_s^{T^*}$
is the cumulant generating function associated with the triplet
$(0,c_s,\lambda_s^{T^*})$.
\end{lemma}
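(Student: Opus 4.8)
The plan is to follow the same route as the proof of Lemma \ref{comp-HJM-mgf}, the only structural differences being that here the relevant cumulant generating function is $\theta_s^{T^*}$, associated with the $\P_{T^*}$-triplet $(0,c_s,\lambda_s^{T^*})$, and that the stochastic integration is carried out against $L^{T^*}$ rather than $L$. First I would rewrite the random variable $H$ from \eqref{ch4:compo-fp-H} by interchanging the finite sum with the integrals, collecting the stochastic part into a single integral $\int_0^{T^*} g(s)\,\ud L^{T^*}_s$ with deterministic integrand $g(s):=\sum_{j=1}^{N}\eta(s,T_j^*)1_{[0,s_j^*]}(s)$, and noting that $\log\frac{B(0,T_N^*)}{B(0,T^*)}=\log\mathcal Z$ together with the drift $\sum_{j=1}^{N}\int_0^{s_j^*}b(s,T_j^*,T^*)\,\ud s$ are deterministic (recall $b(\cdot,T_j^*,T^*)$ from \eqref{ch4:prod2} is a deterministic function of the volatilities and local characteristics). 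Taking expectations under $\P_{T^*}$, the factor $\mathcal Z^z\exp\!\big(z\sum_{j}\int_0^{s_j^*}b(s,T_j^*,T^*)\,\ud s\big)$ pulls out of the expectation, leaving only $\E_{\P_{T^*}}\big[\exp\!\big(z\int_0^{T^*}g(s)\,\ud L^{T^*}_s\big)\big]$ to evaluate.

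The heart of the argument, and the exact analogue of the estimate \eqref{ch4:M-bound-I}, is to verify that $z\,g(s)$ lies in the strip on which $\theta_s^{T^*}$ is defined, i.e.\ that $|\Re(z\,g(s))|\le M$ for every $s\in[0,T^*]$, so that Lemma \ref{log-mom} may be applied to the complex deterministic integrand $z\,g$. This is where the numerology $M'<M$ and $R\le M/M'$ enters. Using the support property $\eta(s,T_j^*)=0$ for $s>T_j^*\ge s_j^*$ (and, correspondingly, $b(s,T_j^*,T^*)=0$ there, as is visible from \eqref{ch4:prod2} since every term carries the factor $\eta(s,T_j^*)$), the indicator-weighted sum $g(s)$ reduces to the full sum $\sum_{j=1}^{N}\eta(s,T_j^*)=\sum_{k=1}^{N}\eta(s,T_k)$, which the hypothesis bounds by $M'$; hence $|\Re(z\,g(s))|\le R\,|g(s)|\le (M/M')M'=M$. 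I expect this step to be the main obstacle, precisely because of the bookkeeping with the indicators $1_{[0,s_j^*]}$ and their reconciliation with the untruncated sums appearing in the statement: one must check that the support structure of $\eta(\cdot,T_j^*)$ and $b(\cdot,T_j^*,T^*)$ allows both the truncations in $H$ to be lifted to $[0,T^*]$ and the indicators to be dropped inside $\theta_s^{T^*}$ and in the drift integral without changing any value.

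With the bound in hand, Lemma \ref{log-mom} yields $\E_{\P_{T^*}}\big[\exp\!\big(z\int_0^{T^*}g(s)\,\ud L^{T^*}_s\big)\big]=\exp\int_0^{T^*}\theta_s^{T^*}\!\big(z\,g(s)\big)\,\ud s$. Reinstating the full sum $\sum_{j=1}^{N}\eta(s,T_j^*)$ for $z\,g(s)$ inside $\theta_s^{T^*}$ and for the drift, via the support argument above, turns this into $\theta_s^{T^*}\!\big(z\sum_{j=1}^{N}\eta(s,T_j^*)\big)$ and $z\sum_{j=1}^{N}b(s,T_j^*,T^*)$, and combining with the pulled-out constant $\mathcal Z^z$ produces the claimed closed form for $M_H(z)$. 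Finally, specializing to real $z=R$ with $R\in[0,M/M']$, the very same estimate shows the integrand of the $\ud s$-integral is bounded, so the integral is finite and $M_H(R)<\infty$, completing the proof.
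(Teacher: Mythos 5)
Your proposal follows the paper's proof essentially verbatim: pull the deterministic constant out of the expectation, verify the key bound $|\Re(z\sum_{k=1}^N\eta(s,T_k))|=R\,|\sum_{k=1}^N\eta(s,T_k)|\le (M/M')M'=M$, apply Lemma \ref{log-mom} to the deterministic complex integrand, and use the vanishing of $\eta(\cdot,T_j^*)$ and $b(\cdot,T_j^*,T^*)$ beyond the relevant dates to reconcile the indicator-truncated sums with the untruncated ones in the stated formula. The only (minor) imprecision is that you justify dropping the indicator $1_{[0,s_j^*]}$ via $\eta(s,T_j^*)=0$ for $s>T_j^*$, whereas what is needed --- and what the paper explicitly invokes --- is vanishing already for $s>s_j^*$, the fixing date of the rate, since $s_j^*$ may be strictly smaller than $T_j^*$.
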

\begin{proof}
Fix an $R\in[0,\frac{M}{M'}]$ and then for $z\in\C$ with $\Re z=R$
we get
\begin{align}\label{ch4:M-bound-II}
\left|\Re\left(z\sum_{k=1}^N\eta(s,T_k)\right)\right|
  =   R\left|\sum_{k=1}^N\eta(s,T_k)\right|
  \leq \frac{M}{M'}M' = M.
\end{align}

Now, define the constant
\begin{align*}
\mathcal Z_2:=\left(\frac{B(0,T^*_N)}{B(0,T^*_{1})}\right)^z
          \exp\left(z\sum_{j=1}^{N}\int_{0}^{s^*_j}b(s,T^*_j,T^*)\ds\right).
\end{align*}
Then, similarly to the proof of Lemma \ref{comp-HJM-mgf}, we get
\begin{align*}
M_H(z)
  &= \mathcal Z_2 \E_{\P_{T^*}}\left[
        \exp\left(\int_0^{T^*}z\sum_{j=1}^{N} \eta(s,T_j^*)1_{[0,s^*_j]}(s)\ud L^{T^*}_s\right)\right]\\
  &= \mathcal Z_2
        \exp\left(\int_{0}^{T^*}\theta_s^{T^*}\Big(z\sum_{j=1}^{N}\eta(s,T^*_j)\Big)\ds\right),
\end{align*}
where for the last equality we have applied Lemma \ref{log-mom},
which is justified by \eqref{ch4:M-bound-II}. Note also that
$\eta(s,T_j^*)=0$ for $s>s^*_j$, which is the fixing date for the
rate; accordingly, $b(s,T^*_j,T^*)=0$ for $s>s^*_j$, cf.
\eqref{ch4:prod2}.

In addition, we get that $M_H(R)<\infty$ for
$R\in[0,\frac{M}{M'}]$.
\end{proof}

\subsection{Numerical illustration}

In order to get an idea about the difference in the prices of an option
on the composition of LIBOR rates between the classical, Brownian-driven,
HJM model and the L\'evy forward rate model, we set up an artificial but
reasonable market. Our aim here is not to give a complete analysis but
rather a flavor of the impact we can expect. As an example, we look at a
5 year caplet on a composition of LIBOR rates.

Rates are assumed to be flat at 4\% across all maturities and instruments.
Moreover, we assume that all volatilities are marked according to the SABR
model (cf. \shortciteNP{HaganKumarLesniewskiWoodward02}) with parameters
$\sigma=1\%$, $\alpha = 40\%$, $\beta=0\%$ and $\rho=30\%$.
For the calibration, we choose the Vasi\v{c}ek volatility structure with a
fixed parameter $a=0.05$. Not surprisingly, the normal inverse Gaussian
L\'evy model calibrates
well to the market smile whereas the classical HJM model only gets the ATM
point right but cannot reproduce any smile or skew; see Figure \ref{Fig1}.

\begin{figure}
\begin{center}
 \includegraphics[height=6.0cm,keepaspectratio=true]{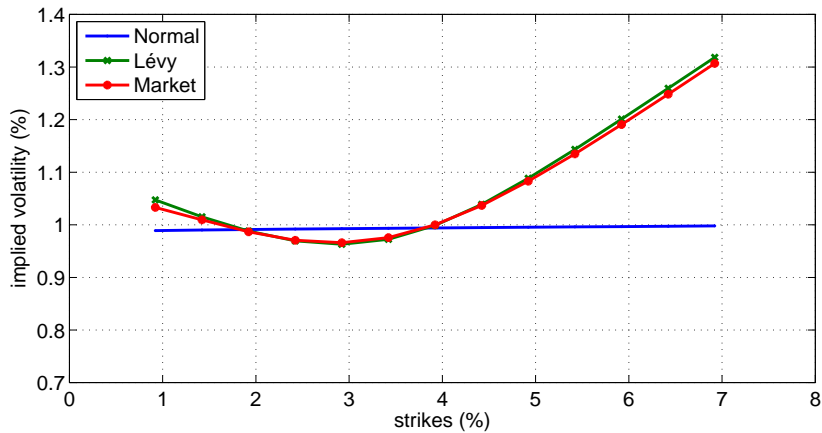}
  \caption{Implied normal volatilities for a 5Y option on the 3M LIBOR.}
  \label{Fig1}
\end{center}
\end{figure}

The results for a 5Y caplet on a composition of LIBOR rates are shown in
Figure \ref{Fig2}. It should not be surprising that the classical model
does not produce any smile whereas the L\'evy model does; note that the
ATM prices are also different, where ATM $\approx1.22$. Moreover, as has
been observed in many
other situations, the Brownian-driven model overprices the ATM options
and underprices the in- and out-of-the-money options compared to the more
realistic \lev model.

\begin{figure}
\begin{center}
 \includegraphics[height=6.0cm,keepaspectratio=true]{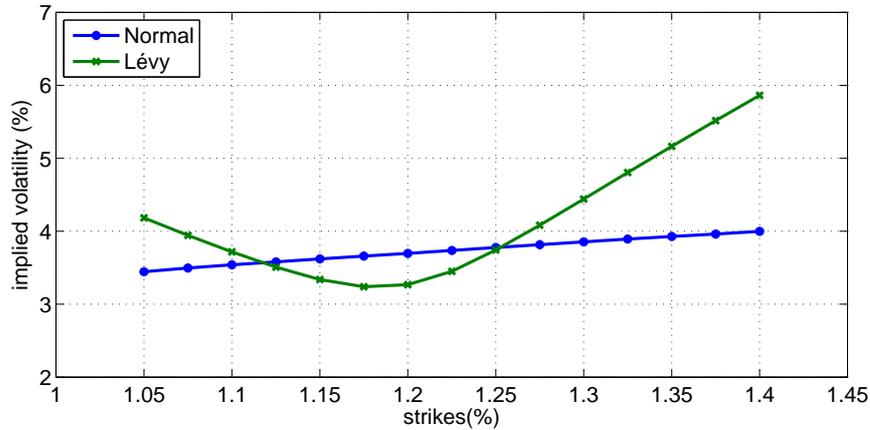}
  \caption{Implied normal volatilities for a 5Y option on composition.}
  \label{Fig2}
\end{center}
\end{figure}

\section{Conclusion}
\label{conclusion}

We have presented valuation formulas, based on Fourier transforms, for
pricing an option on the composition of LIBOR rates in the forward rate
and forward price models driven by \tih \lev processes. Analogous formulas
can also be derived for the affine `forward price'-type framework proposed
by \citeN{KellerResselPapapantoleonTeichmann09}; this framework combines
the analytical tractability of the forward price framework described here
with \emph{positive} LIBOR rates.

The challenge ahead is to derive valuation formulas in the LIBOR model
driven by a \tih \lev process. This task requires some sophisticated
approximations due to the structure of the dynamics in LIBOR market models;
the interested reader is referred to \citeANP{SiopachaTeichmann07}
\citeyear{SiopachaTeichmann07} and
\citeN{HubalekPapapantoleonSiopacha09} for a detailed analysis.

\bibliographystyle{chicago}
%\bibliography{/home/famuser/papapan/Papers/references}
\bibliography{C:/WorkFiles/Papers/references}

\end{document}